\numberwithin{equation}{section}
\def\R{\mathbb R}
\def\C{\mathbb C}
\def\d{\rm d}
\def\Wr{\rm Wr}
\def\Im{\rm Im}
\def\exp{\rm exp}
\DeclareMathOperator*{\Res}{Res}
\newtheorem{lemma}{Lemma}[section]
\begin{document}
	
	\title{Focusing and defocusing mKdV equation with fully asymmetric nonzero boundary conditions: the inverse scattering transform  }
	\author{Yi ZHAO$^1$\thanks{\ Corresponding author and email address: zhaoy23@bnu.edu.cn },  Dinghao ZHU$^1$ }
	\footnotetext[1]{ \  School of Mathematical Sciences, Laboratory of Mathematics and Complex Systems, MOE, Beijing Normal University, 100875 Beijing, China }

	\date{ }
	
	\maketitle
	\begin{abstract}
		\baselineskip=15pt
		In this paper, we investigate the inverse scattering transform(IST) for the focusing and defocusing mKdV equation with fully asymmetric nonzero boundary conditions. Our analysis focuses on the properties of the Jost function, allowing us to establish the associated Riemann-Hilbert (RH) problem for both the focusing and defocusing cases. Furthermore, we recover the potential from the RH problem, which plays a crucial role in calculating the long-time asymptotic behavior of the solution.  
			In the symmetric case, the eigenvalues are given by $\lambda^2 = k^2 \pm q_0^2$. However, in the asymmetric case, the eigenvalues are represented as $\lambda_\pm^2 = k^2 \pm q_\pm^2$. As a result, we directly handle the branch cut instead of utilizing the four-sheeted Riemann surface. Naturally, this approach leads to the discontinuity of the functions across their respective branch cuts, which significantly impacts the entire development of the IST. \\[6pt]
		{\bf Keywords:}  Modified KdV equation; Inverse scattering tansform; Riemann-Hilbert problem;  Integrable systems; Asymmetric nonzero boundary conditions.
		\\[6pt]

	\end{abstract}

	\baselineskip=16pt
	
	\newpage

	\tableofcontents

	\section{ Introduction}
	\indent\setlength{\parindent}{1em}
	In this paper, we study the Cauchy problem to the modified Korteweg-de Viries(mKdV) equation
	\begin{equation}
		\begin{split}
			&q_t-6\sigma q^2q_x+q_{xxx}=0,\\
			&q\rightarrow q_{\pm},\quad x\rightarrow \pm\infty,
			\label{equ}
		\end{split}
	\end{equation}
	where $q(x,t)\in\R$, $\sigma=\pm 1$ and $q_\pm$ are real numbers. Without loss of generality, in the subsequent content, we assume that 
	$q_->q_+>0$. 
	
	The Korteweg-de Vries(KdV) equation 
	\begin{equation}
		u_t+6uu_x+u_{xxx}=0
		\label{kdv}	
	\end{equation}
	was proposed by Korteweg and de Vries to describe wave propagation on the surface of shallow water\cite{as1981siam}. 
	
	It is well-known that 	the mKdV equation  (\ref{equ})  admits the Lax pair\cite{lax}
	\begin{align}
		&\phi_x=X\phi,\label{spec}\\
		&\phi_t=T\phi,\label{time}
	\end{align}
	where 
	\begin{equation}
		\begin{split}
			&X=ik\sigma_3+Q,\\
			&T=4k^2X-2ik\sigma_3(Q_x-Q^2)+2Q^3-Q_{xx},
		\end{split}
		\nonumber
	\end{equation}
	with $Q=\left(\begin{array}{cc}
		0 & q(x,t)\\
		\sigma q(x,t) & 0
	\end{array}\right).$
	
	In 1967, Miura presented an explicit nonlinear transformation(the well-known Miura transformation) \cite{miura1968} which related  solutions of the KdV equation (\ref{kdv}) and the mKdV equation (\ref{equ}).
	
	The mKdV equation (\ref{equ}) arises in various physical fields, such as acoustic waves, phonons in a anharmonic lattices\cite{3,4}, Alfvén wave in cold collision-free plasma\cite{5,6}, dynamics of traffic flow \cite{9,10} and meandering ocean currents\cite{8}. 
	
    There has been significant research on the mKdV equation (\ref{equ}) in the field of integrable systems\cite{xin1,xin2,xin3,xin4,xin5}. The inverse scattering transform(IST) has been widely used to obtain the N-soliton solution for the mKdV equation (\ref{equ}). Under  zero boundary conditions(ZBCs), the N-soliton solution has been explicitly derived\cite{25,wadati}. Under  nonzero boundary conditions(NZBCs),  Yeung et al obtained N-soliton solution for mKdV equation (\ref{equ}) with nonzero vaccum parameter\cite{Yeung1,Yeung2}, while Alejo obtained the explicit breather solution for the focusing mKdV equatio\cite{Alejo}.   For both focusing and defocusing mKdV equation with symmetric NZBCs, Zhang and Yan  developed the direct and inverse scattering problem using a suitable uniformization variable \cite{zhangyan}. He et al reexamined  the Hamiltonian formalism with NZBC\cite{He}. 
    Additionally, several studies have explored the asymptotic analysis of the mKdV equation (\ref{equ}). The long-time asymptotic behavior with ZBCs was derived by the steepest descent method\cite{longz}.  Baldwin discovered the  long-time-asymptotic dispersive shock wave solution for the mKdV equation in 2013\cite{baldwin}. In 2016, Germain proved a full asymptotic stability result for solitary wave solutions of the mKdV equation\cite{Germain}.
	
	To the best of our knowledge, no results are available for the mKdV equation with fully asymmetric boundary conditions.
	
	
	The paper is organized as follows.
	In section \ref{sec:section2}, we focus on the focusing mKdV equation. Instead of using the Riemann surface approach, we directly handle the branch cut to establish the direct and inverse scattering transformation. We solve the Riemann-Hilbert (RH) problem and recover the potential from it.
	In section \ref{sec:section3}, we consider the defocusing mKdV equation. We present a different branch cut configuration, which leads to a key difference compared to the focusing case: the analyticity of the eigenfunction. Additionally, the self-adjoint operator property of the defocusing case results in a distinct discrete spectrum. We obtain the expression for the potential in this case as well.

	\section{The focusing mKdV equation}\label{sec:section2}
		In this  section, we consider the focusing case, i.e. $\sigma=-1$
	First, let us establish and define some notations that will be used throughout this section.
	\begin{equation}\begin{split}
			&\Sigma_{\pm}=[-i q_\pm, iq_\pm],\\
			& \Sigma_0=[-iq_-,-iq_+]\cup [iq_+,iq_-],      \\
			&\mathring{\Sigma}_{\pm}=(-i q_\pm, iq_\pm),        \\
			&\mathring{\Sigma}_{0}= (-iq_-, -iq_+)\cup (iq_+, iq_-).     
			\nonumber
	\end{split}\end{equation}
		And three Pauli matrices are defined as
	$$\sigma_1=\left(
	\begin{array}{cc}
		0 & 1\\
		1 & 0
	\end{array}
	\right),\quad \sigma_2=\left(
	\begin{array}{cc}
		0 & -i\\
		i & 0
	\end{array}
	\right),\quad \sigma_3=\left(
	\begin{array}{cc}
		1 & 0\\
		0 & -1
	\end{array}
	\right).
	$$
	
		\subsection{Direct scattering problem}
	Next, we  will develop the spectral analysis and introduce the scattering matrix to prepare for the set-up of the RH problem.
	\subsubsection{Spectral analysis}

	To obtain the Jost functions of the spectral problem (\ref{spec}), we need to diagonalize the asymptotic matrix $X_{\pm}=ik\sigma_3+Q_{\pm}$. By calculation, the eigenvalues of the  matrix $X_{\pm}$ are $\pm i \lambda_{\pm}$, where
	
	$$\lambda_{\pm}^2=k^2+q_{\pm}^2.$$
	
	To avoid the multivalues  of the function $\lambda_{\pm}$, we take the branch cut for $\lambda_{\pm}$ on $\Sigma_{\pm}$. For all $k\in\C\backslash\Sigma_{\pm}$, we define $\lambda_{\pm}$ as the single-valued, analytic and continuous as $k$ approaches $\Sigma_\pm$ from right. For $k\in\Sigma_\pm$,  we take the positive value of the real square root for $\lambda_\pm$. 
	\begin{center}
		\begin{tikzpicture}
			\draw [->](-6.2,0)--(1.8,0);

			\draw [->](-2.2,-3.5)--(-2.2,3.7);

			\draw [-](-2.2,2)--(-2.1,2);
			\node [thick] [left]  at (-2.2,2){\footnotesize $iq_-$};
			
			\draw [-](-2.2,1)--(-2.1,1);
			\node [thick] [left]  at (-2.2,1){\footnotesize $iq_+$};
			
			\draw [-](-2.2,-1)--(-2.1,-1);
			\node [thick] [left]  at (-2.2,-1){\footnotesize $-iq_+$};
			
			\draw [-](-2.2,-2)--(-2.1,-2);
			\node [thick] [left]  at (-2.2,-2){\footnotesize $-iq_-$};
			
			\draw[dashed] [blue] [ultra thick]  (-2.1,-1)--(-2.1,1);
			\node[blue] [ultra thick] [right]  at (-2.0,1){\footnotesize $\Sigma_+$};
			
			\draw[dashed] [red] [ultra thick] (-2.3,-2)--(-2.3,2);
			\node[red] [ultra thick] [left]  at (-2.5,1.7){\footnotesize $\Sigma_-$};
			%
			%
			%
			%
			\node [thick] [above]  at (-2.2,3.7){\footnotesize ${\rm Im}k$};
			\node [thick] [above]  at (2.1,-0.2){\footnotesize ${\rm Re}k$};
		\end{tikzpicture}
		\center{\footnotesize {\bf Fig.~1.} The branch cuts $\Sigma_-$ and $\Sigma_+$ to the focusing case}
	\end{center}

	The  eigenvector matrix is 
	$$E_{\pm}=I+\frac{i}{k-\lambda_{\pm}}\sigma_3 Q_{\pm}$$ corresponds to the eigenvalues $\mp i\lambda_{\pm}$, 
	and the determinant is $\det E_\pm=\frac{2\lambda_\pm}{\lambda_\pm-k}\triangleq \gamma_\pm(k)$.
	
	After diagonalizition,  we have 
	\begin{equation}
		\phi_{\pm}\sim E_{\pm}e^{-i\lambda_{\pm}x\sigma_3},\quad x\rightarrow \pm\infty.
	\end{equation}
	Define the normalized Jost functions $\mu_{\pm}$ as
	\begin{equation} 
		\mu_{\pm}=\phi_{\pm}e^{i\lambda_{\pm}x\sigma_3},
		\label{2}
	\end{equation}
	then we have $$\mu_{\pm}\rightarrow E_{\pm},\quad x\rightarrow \pm \infty.$$ 
	And they satisfy the Volterra integral equation
	\begin{equation}
		\mu_{\pm}=E_{\pm}(k)+\int_{\pm\infty}^x E_{\pm}(k)e^{-i\lambda_{\pm}(x-y)\hat\sigma_3} \left[
		E_{\pm}^{-1}(k)\Delta Q_{\pm}(y) \mu_{\pm}(y,k) 
		\right] \d y
		\label{integral}
	\end{equation}
	with $\Delta Q_{\pm}=Q-Q_{\pm}$.
	
	\begin{lemma}\label{2.1}  If $(1+|x|)(q-q_{\pm})\in L^1(\R_{\pm})$, then
		\begin{itemize}
			
			\item Analyticity: the function $\mu_{-,1}(x,t,k)$ and $\mu_{+,2}(x,t,k)$ are analytic in $\C^+\backslash [0,iq_-]$, whereas the function $\mu_{+,1}(x,t,k)$ and $\mu_{-,2}(x,t,k)$ are analytic in $\C^-\backslash [-iq_-,0]$. 
			%
			\item Symmetry: Let  $\widetilde \phi_{\pm}=\phi_{\pm}(x,t,-\lambda_{\pm}(k))$, then we have
			\begin{equation}
				\phi_{\pm}=\widetilde \phi_{\pm}\frac{i}{k-\lambda_{\pm}}\sigma_3 Q_{\pm},\quad k\in\mathring{\Sigma}_{\pm},
			\end{equation}
			
			Column-wise reads
			\begin{align}
				&\phi_{\pm,1}=\frac{iq_{\pm}}{k-\lambda_\pm}\widetilde \phi_{\pm,2}, \quad k\in\mathring{\Sigma}_{\pm},\label{22.6}\\
				&\phi_{\pm,2}=\frac{iq_{\pm}}{k-\lambda_\pm}\widetilde \phi_{\pm,1}, \quad k\in\mathring{\Sigma}_{\pm}.\label{22.7}
			\end{align}
			
		\end{itemize}
	\end{lemma}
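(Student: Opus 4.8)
The plan is to treat the two assertions by independent mechanisms: analyticity comes from a Neumann-series analysis of the Volterra equation (\ref{integral}), whereas the symmetry relation comes from the fact that $X=ik\sigma_3+Q$ depends on $k$ only, so that $\phi_\pm$ and $\widetilde\phi_\pm$ solve the \emph{same} $x$-equation and must therefore differ by a constant (in $x$) right matrix factor.

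For the analyticity, I would expand $\mu_\pm=\sum_{n\ge0}\mu_\pm^{(n)}$ by iterating (\ref{integral}) with $\mu_\pm^{(0)}=E_\pm(k)$. Writing the adjoint action $e^{-i\lambda_\pm(x-y)\hat\sigma_3}$ out entrywise, the first column of each iterate carries the factor $e^{2i\lambda_\pm(x-y)}$ in its lower entry and no exponential in its upper entry, while the second column carries $e^{-2i\lambda_\pm(x-y)}$ in its upper entry. Since $E_\pm$ and $E_\pm^{-1}$ are analytic in $k$ off the cut $\Sigma_\pm$ (the only possible singularities are the branch points $k=\pm iq_\pm$, where $\det E_\pm=\gamma_\pm$ vanishes), analyticity of the sum reduces to local uniform convergence of the series. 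Because $\bigl|e^{\pm 2i\lambda_\pm(x-y)}\bigr|=e^{\mp 2\,\mathrm{Im}\,\lambda_\pm\,(x-y)}$, with the $\mu_-$ integral running over $y\le x$ and the $\mu_+$ integral over $y\ge x$, each relevant exponential is bounded by $1$ exactly when $\mathrm{Im}\,\lambda_\pm$ has the appropriate sign; together with the hypothesis $(1+|x|)(q-q_\pm)\in L^1(\R_\pm)$, a Weierstrass $M$-test then yields absolute and locally uniform convergence, and the limit inherits analyticity in the interior of the admissible region.

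The crux is therefore the sign of $\mathrm{Im}\,\lambda_\pm$. Here I would use that $\mathrm{Im}\,\lambda_\pm$ is harmonic on $\C^{\pm}\setminus\Sigma_\pm$, that it vanishes on the whole boundary (on $\R$ and on both sides of the slit $\Sigma_\pm$, where $\lambda_\pm^2>0$ forces $\lambda_\pm$ real), and that the prescribed branch gives $\lambda_\pm\sim k$ at infinity, hence $\mathrm{Im}\,\lambda_\pm\sim\mathrm{Im}\,k\to\pm\infty$ on $\C^{\pm}$; the minimum/maximum principle then pins $\mathrm{Im}\,\lambda_\pm>0$ on all of $\C^+\setminus\Sigma_\pm$ and $\mathrm{Im}\,\lambda_\pm<0$ on all of $\C^-\setminus\Sigma_\pm$ (consistent with the Schwarz reflection $\lambda_\pm(\bar k)=\overline{\lambda_\pm(k)}$). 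Consequently $\mu_{-,1}$ (needing $\mathrm{Im}\,\lambda_->0$) is analytic on $\C^+\setminus[0,iq_-]$ and $\mu_{-,2}$ (needing $\mathrm{Im}\,\lambda_-<0$) on $\C^-\setminus[-iq_-,0]$, while the same computation with $\lambda_+$ gives analyticity of $\mu_{+,2}$ on $\C^+\setminus[0,iq_+]$ and of $\mu_{+,1}$ on $\C^-\setminus[-iq_+,0]$, which contain the stated common domains $\C^+\setminus[0,iq_-]$ and $\C^-\setminus[-iq_-,0]$. I expect this sign bookkeeping across the cut — in particular ruling out a sign change of $\mathrm{Im}\,\lambda_\pm$ within one half-plane — to be the main obstacle, together with the technical care needed near $k=\pm iq_\pm$ where $E_\pm^{-1}$ is singular.

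For the symmetry, fix $k\in\mathring{\Sigma}_\pm$, where $\lambda_\pm$ is real and nonzero so that $\pm\lambda_\pm$ are the two genuine distinct roots and $E_\pm(\pm\lambda_\pm)$ are invertible. Both $\phi_\pm(x,t,k)$ and $\widetilde\phi_\pm=\phi_\pm(x,t,-\lambda_\pm)$ solve $\psi_x=X\psi$ with the same $X$, so $(\phi_\pm^{-1}\text{-type Wronskian})$ is constant and $\phi_\pm=\widetilde\phi_\pm\,C$ for some $x$-independent $C$. Matching $\phi_\pm\sim E_\pm(\lambda_\pm)e^{-i\lambda_\pm x\sigma_3}$ and $\widetilde\phi_\pm\sim E_\pm(-\lambda_\pm)e^{i\lambda_\pm x\sigma_3}$ gives $C=e^{-i\lambda_\pm x\sigma_3}E_\pm(-\lambda_\pm)^{-1}E_\pm(\lambda_\pm)e^{-i\lambda_\pm x\sigma_3}$. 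A direct computation using $\sigma_3 Q_\pm=q_\pm\sigma_1$ and the identity $\tfrac{iq_\pm}{k-\lambda_\pm}\cdot\tfrac{iq_\pm}{k+\lambda_\pm}=-\tfrac{q_\pm^2}{k^2-\lambda_\pm^2}=1$ shows $E_\pm(-\lambda_\pm)^{-1}E_\pm(\lambda_\pm)=\tfrac{iq_\pm}{k-\lambda_\pm}\sigma_1=\tfrac{i}{k-\lambda_\pm}\sigma_3 Q_\pm$ is anti-diagonal; the anti-diagonality, via $\sigma_1 e^{-i\lambda_\pm x\sigma_3}=e^{i\lambda_\pm x\sigma_3}\sigma_1$, makes $C$ independent of $x$ as required and delivers $\phi_\pm=\widetilde\phi_\pm\,\tfrac{i}{k-\lambda_\pm}\sigma_3 Q_\pm$. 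Reading off the two columns produces (\ref{22.6}) and (\ref{22.7}). This last part is essentially a bookkeeping computation, and I anticipate no difficulty beyond checking invertibility of $E_\pm(-\lambda_\pm)$, which holds on $\mathring{\Sigma}_\pm$.
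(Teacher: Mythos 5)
Your proposal is correct, and it is worth noting how it lines up against the paper. For the symmetry, your argument is in essence the same as the paper's: the paper observes that $\widetilde\phi_\pm$ solves the same spectral problem (since $X=ik\sigma_3+Q$ does not involve $\lambda_\pm$) and then checks directly that $\widetilde\phi_\pm\tfrac{i}{k-\lambda_\pm}\sigma_3Q_\pm$ has the asymptotics $E_\pm(\lambda_\pm)e^{-i\lambda_\pm x\sigma_3}$ of $\phi_\pm$ as $x\to\pm\infty$; you package the same idea slightly more systematically by writing $\phi_\pm=\widetilde\phi_\pm C$ with $C=\widetilde\phi_\pm^{-1}\phi_\pm$ constant in $x$ and computing $C=E_\pm(-\lambda_\pm)^{-1}E_\pm(\lambda_\pm)=\tfrac{i}{k-\lambda_\pm}\sigma_3Q_\pm$ from the $x\to\pm\infty$ limit, the anti-diagonality of $C$ being exactly what makes the limit $x$-independent (your algebra here checks out: $\tfrac{iq_\pm}{k-\lambda_\pm}\sigma_1=-\tfrac{i(k+\lambda_\pm)}{q_\pm}\sigma_1$ since $k^2-\lambda_\pm^2=-q_\pm^2$). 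The genuine difference is the analyticity statement: the paper's proof begins ``We only prove the symmetry'' and omits analyticity altogether, whereas you supply the standard Neumann-series argument for the Volterra equation (\ref{integral}), with the correct bookkeeping of which exponential $e^{\pm 2i\lambda_\pm(x-y)}$ sits in which column, which half-line the integration runs over, and the resulting sign condition on ${\rm Im}\,\lambda_\pm$; your maximum-principle argument pinning ${\rm Im}\,\lambda_\pm>0$ on $\C^+\setminus[0,iq_\pm]$ and ${\rm Im}\,\lambda_\pm<0$ on $\C^-\setminus[-iq_\pm,0]$ is sound (one can also verify it directly near both sides of the slit), and your observation that $\mu_{+,1},\mu_{+,2}$ are in fact analytic on the larger domains $\C^\mp\setminus[\mp iq_+,0]$, which contain the common domains stated in the lemma, is exactly the right way to reconcile your estimate with the statement. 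So your proof buys a self-contained treatment of the half of the lemma the paper leaves implicit, at the cost of the extra technical care you flag near the branch points $k=\pm iq_\pm$ where $E_\pm^{-1}$ degenerates.
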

	\begin{proof}
		
		We only prove the  symmetry.  Note that in defining $\lambda_\pm$, we could have taken the opposite sign of  the comlex square roots.  With some abuse of notation, we temporaily express the dependence of the Jost functions on the choice of sign explicitly. 
		
		From the above discussion, if $\phi_\pm$ solves the spectral problem (\ref{spec}), so does $\widetilde \phi_{\pm}$. Recall the asymptotic $\phi_\pm\sim \left( I+\frac{i}{k-\lambda_\pm}\sigma_3Q_\pm  \right) e^{-i\lambda_\pm x\sigma_3}$ as $x\rightarrow \pm\infty$, thus we have $\widetilde \phi_\pm \sim  \left( I+\frac{i}{k+\lambda_\pm}\sigma_3Q_\pm  \right) e^{i\lambda_\pm x\sigma_3}$. Direct check leads to 
		\begin{equation}
			\phi_\pm=\widetilde \phi_\pm \frac{i}{k-\lambda_\pm}\sigma_3 Q_\pm, \quad k\in \mathring \Sigma_\pm,
		\end{equation}
		which can be rewittten column-wise as
		\begin{align}
			& \phi_{\pm, 1}=\frac{iq_\pm}{k-\lambda_\pm}\widetilde \phi_{\pm, 2},\quad k\in\mathring \Sigma_\pm,  \\
			&\phi_{\pm, 2}=\frac{iq_\pm}{k-\lambda_\pm}\widetilde \phi_{\pm, 1},\quad k\in\mathring \Sigma_\pm.
		\end{align}
	\end{proof}
	
	\subsubsection{Scattering matrix}
	There exists a scattering matrix $S(k)$ such that 
	\begin{equation}
		\phi_-(x,t,k)=\phi_+(x,t,k)S(k),\quad k\in\Sigma_+.
		\label{scattering}
	\end{equation}
	Thus, we have $\det S=\frac{\gamma_-}{\gamma_+}$. And the  scattering coefficients can be written as the Wronskians expressions
	\begin{equation}
		\begin{split}
			&s_{11}=\frac{\Wr(\phi_{-,1},\phi_{+, 2})}{\gamma_+},\quad  s_{12}=\frac{\Wr(\phi_{-,2},\phi_{+, 2})}{\gamma_+},\\
			&s_{21}=\frac{\Wr(\phi_{+,1},\phi_{-, 1})}{\gamma_+},\quad  s_{22}=\frac{\Wr(\phi_{+,1},\phi_{-, 2})}{\gamma_+}.
		\end{split}\label{wron}
	\end{equation}
	Define
	\begin{equation}
		\rho=\frac{s_{21}}{s_{11}},\quad \tilde \rho=\frac {s_{12}} {s_{22}}.
	\end{equation}
	\begin{lemma}\label{2.2}
			Let $s_{11}^-(k)=\lim_{\epsilon\uparrow 0} s_{11}(\epsilon+k)$ for $k\in\Sigma_-$, then we have 
			\begin{equation}
				\begin{split}
					&s_{11}^-(k)=s_{22}(k)\frac{( k+\lambda_+ )(k-\lambda_-)}{  -q_+q_- },\quad k\in\Sigma_+,\\
					&s_{11}^-(k)=s_{12}(k)\frac{k-\lambda_-}{iq_-},\quad k\in\Sigma_0.
				\end{split}\label{2.17}
			\end{equation}
			Let $s_{22}^-(k)=\lim_{\epsilon\uparrow 0} s_{22}(\epsilon+k)$ for $k\in\Sigma_-$, then we have 
			\begin{equation}
				\begin{split}
					&s_{22}^-(k)=s_{11}(k)\frac{( k+\lambda_+ )(k-\lambda_-)}{  -q_+q_- },\quad k\in\Sigma_+,\\
					&s_{22}^-(k)=s_{21}(k)\frac{k-\lambda_-}{iq_-},\quad k\in\Sigma_0.
				\end{split}\label{2.17}
			\end{equation}
		\end{lemma}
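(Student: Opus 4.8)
The plan is to recognize that the left boundary value $s_{11}^-(k)$ is nothing but the scattering coefficient read off on the opposite branch of $\lambda_-$, and then to convert that branch flip back into the right-side data using the symmetry relations \eqref{22.6}--\eqref{22.7} of Lemma \ref{2.1}. Concretely, since $\lambda_-$ is fixed to be continuous as $k\to\Sigma_-$ from the right, approaching $\Sigma_-$ from the side $\mathrm{Re}\,k<0$ (that is, $\epsilon\uparrow 0$) sends $\lambda_-\mapsto-\lambda_-$; the Jost columns carrying this dependence therefore pass to their tilde counterparts $\phi_{-,j}\mapsto\widetilde\phi_{-,j}$. The whole proof is then a substitution into the Wronskian formulas \eqref{wron} followed by algebraic simplification.

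First I would tabulate how each building block transforms across $\Sigma_-$, and this is where the two cases of \eqref{2.17} originate. On the inner segment $\Sigma_+$, the function $\lambda_+$ also has its branch cut, so the flip is \emph{simultaneous}: $\lambda_+\mapsto-\lambda_+$, hence $\phi_{+,j}\mapsto\widetilde\phi_{+,j}$ and $\gamma_+=\tfrac{2\lambda_+}{\lambda_+-k}\mapsto\widetilde\gamma_+=\tfrac{2\lambda_+}{\lambda_++k}$. On the outer segment $\Sigma_0$, by contrast, $\lambda_+$ remains analytic, so $\phi_{+,j}$ and $\gamma_+$ are untouched and only $\phi_{-,1}$ picks up a tilde. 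I would also invert the column symmetries of Lemma \ref{2.1} once and for all, recording $\widetilde\phi_{-,1}=\tfrac{k-\lambda_-}{iq_-}\phi_{-,2}$, $\widetilde\phi_{+,2}=\tfrac{k-\lambda_+}{iq_+}\phi_{+,1}$, $\widetilde\phi_{+,1}=\tfrac{k-\lambda_+}{iq_+}\phi_{+,2}$, and $\widetilde\phi_{-,2}=\tfrac{k-\lambda_-}{iq_-}\phi_{-,1}$.

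With these in hand the computation for $s_{11}^-$ is short. For $k\in\Sigma_+$ I start from $s_{11}=\Wr(\phi_{-,1},\phi_{+,2})/\gamma_+$ to get $s_{11}^-=\Wr(\widetilde\phi_{-,1},\widetilde\phi_{+,2})/\widetilde\gamma_+$, substitute the two tilde identities, pull the scalars out of the Wronskian, and use $\Wr(\phi_{-,2},\phi_{+,1})=-\Wr(\phi_{+,1},\phi_{-,2})=-s_{22}\gamma_+$; this gives $s_{11}^-=\tfrac{(k-\lambda_-)(k-\lambda_+)}{q_-q_+}\,s_{22}\,\tfrac{\gamma_+}{\widetilde\gamma_+}$, and the prefactor collapses through the identity $(k-\lambda_+)\gamma_+/\widetilde\gamma_+=-(k+\lambda_+)$ to yield the first line of \eqref{2.17}. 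For $k\in\Sigma_0$ only $\widetilde\phi_{-,1}$ appears and $\gamma_+$ survives, so $s_{11}^-=\Wr(\widetilde\phi_{-,1},\phi_{+,2})/\gamma_+=\tfrac{k-\lambda_-}{iq_-}\,s_{12}$, the second line. The two identities for $s_{22}^-$ follow verbatim from $s_{22}=\Wr(\phi_{+,1},\phi_{-,2})/\gamma_+$, now using the tilde identities for $\widetilde\phi_{+,1}$ and $\widetilde\phi_{-,2}$ and recognizing $s_{11}$ on $\Sigma_+$ and $s_{21}$ on $\Sigma_0$.

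The algebra is routine; the genuinely delicate point will be the bookkeeping of the analytic continuation. I must justify that crossing the cut really does replace each $\lambda_-$-dependent column by its tilde-version, and that the symmetries of Lemma \ref{2.1}, stated a priori only on the open cut $\mathring\Sigma_\pm$, persist as the boundary values that enter the Wronskians. The main obstacle I anticipate is correctly distinguishing the \emph{simultaneous} flip of $\lambda_+$ and $\lambda_-$ on $\Sigma_+$ from the \emph{single} flip on $\Sigma_0$—that is, reading off precisely which factors carry a tilde in each region—while keeping the Wronskian antisymmetry signs consistent so that the simplification $(k-\lambda_+)\gamma_+/\widetilde\gamma_+=-(k+\lambda_+)$ lands with the correct overall sign $-q_+q_-$ in the denominator.
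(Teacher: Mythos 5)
Your proposal is correct and follows essentially the same route as the paper's own proof: the left boundary values across $\Sigma_-$ are identified with the tilde (opposite-branch) functions via the inverted symmetries \eqref{22.6}--\eqref{22.7}, the jump of $\gamma_+$ is accounted for on $\Sigma_+$ (with $\gamma_+^-=\tfrac{2\lambda_+}{k+\lambda_+}$, exactly your $\widetilde\gamma_+$) while $\gamma_+$ and $\phi_{+,2}$ are left untouched on $\Sigma_0$, and the Wronskian formulas \eqref{wron} then deliver all four identities. Your sign bookkeeping, including the simplification $(k-\lambda_+)\gamma_+/\widetilde\gamma_+=-(k+\lambda_+)$, reproduces the paper's computation exactly.
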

		\begin{proof}
			For $k\in \Sigma_+$, we know  $\gamma_+^-\triangleq \lim_{\epsilon\uparrow 0}\gamma_+(k+\epsilon)=\frac{2\lambda_+}{k+\lambda_+}$. It follows from the Wronskians expression of $s_{11}$ (\ref{wron}), (\ref{22.6}) and (\ref{22.7}) that 
			\begin{equation}
				\begin{split}
					s_{11}^-=&\frac{\Wr (\phi_{-,1}^-, \phi_{+,2}^-)}    { \gamma_+^- }\\
					=&\frac{k+\lambda_+ }  {2\lambda_+} \Wr \left(   \frac{k-\lambda_-}{iq_-} \phi_{-,2},  \frac{k-\lambda_+}   {iq_+}\phi_{+,1}       \right)\\
					=&s_{22}  \frac{( k+\lambda_+ )(k-\lambda_-)}{  -q_+q_- }.
				\end{split}
			\end{equation}
			For $k\in\Sigma_0$,  the function $\phi_{+,2}$ has no jump, i.e, $\phi_{+,2}^-=\phi_{+,2}$, thus we have 
			\begin{equation}
				\begin{split}
					s_{11}^-&=\frac{\lambda_+-k}  {2\lambda_+}\Wr (\phi_{-,1}^-, \phi_{+,2})\\
					&=\frac{\lambda_+-k}  {2\lambda_+}\Wr \left( \frac{k-\lambda_-}{iq_-} \phi_{-,2},    \phi_{+,2} \right)\\
					&=s_{12}\frac{k-\lambda_-}  {iq_-}.
				\end{split}
			\end{equation}
			Performing a similar procedure leads to 
			\begin{equation}
				\begin{split}
					s_{22}^-=&\frac{\Wr (\phi_{+,1}^-, \phi_{-,2}^-)}    { \gamma_+^- }\\
					=&\frac{k+\lambda_+ }  {2\lambda_+} \Wr \left(   \frac{k-\lambda_+}{iq_+} \phi_{+,2},  \frac{k-\lambda_-}   {iq_-}\phi_{-,1}       \right)\\
					=&s_{11}  \frac{( k+\lambda_+ )(k-\lambda_-)}{  -q_+q_- },\quad k\in\Sigma_+
				\end{split}
			\end{equation}
			and 
			\begin{equation}
				\begin{split}
					s_{22}^-&=\frac{\lambda_+-k}  {2\lambda_+}\Wr \left(\phi_{+,1}, \frac{k-\lambda_-}{iq_-}\phi_{-,1}\right)\\
					&=\frac{\lambda_+-k}  {2\lambda_+}\Wr \left( (\phi_{+,1},  \frac{k-\lambda_-}{iq_-} \phi_{-,1} \right)\\
					&=s_{21}\frac{k-\lambda_-}  {iq_-},\quad k\in\Sigma_0.
				\end{split}
			\end{equation}
			
		\end{proof}
		\subsubsection{Discrete spectrum and residue condition}
		
		Suppose $s_{11}(k,t)$ has $J$ simple zeros in $\C^+\backslash[0,iq_-]$ denoted by $k_j,\hspace{0.1cm} j=1,2,\ldots, J$. From the Wronskians expression of $s_{11}(k,t)$, there exists a constant only depending on $t$ such that
		\begin{equation}
			\phi_{-,1}(x,t,k_j)=b_j(t)\phi_{+,2}(x,t,k_j).
		\label{norming}	\end{equation}
		On account of (\ref{2}), we have
		$$\mu_{-,1}(x,t,k_j)=b_j(t)\mu_{+,2(x,t,k_j)}e^{i(\lambda_{+,j}+\lambda_{-,j})x},$$
		thus for $j=1,2,\ldots,J$, we get
		\begin{equation}
			\Res_{k=k_j}\bigg[   \frac{\mu_{-,1}(x,t,k)}{s_{11}(k,t)}    \bigg]=\frac{\mu_{-,1}(k_j)}{s_{11}^{'}(k_j)}=C_j(t)e^{i(\lambda_{+,j}+\lambda_{-,j})x}\mu_{+,2}(x,t,k_j),
		\end{equation}
		with $C_j=\frac{b_j}{s_{11}^{'}(k_j)}$ and $\lambda_{\pm,j}=\lambda_\pm(k_j)$.
		
		From the symmetry of the scattering matrix(see (35) in \cite{zhangyan}), we know that $k_j^*$ are also zeros of $s_{22}(k)$. Similarly, we get
		\begin{equation}
			\Res_{k=k_j^*} \bigg[   \frac{\mu_{+,2}(x,t,k)}{s_{22}(k,t)}    \bigg]=\frac{\mu_{+,2}(k_j^*)}{s_{22}^{'}(k_j^*)}=-C_j^*e^{i(\lambda_{+,j}+\lambda_{-,j})x}\mu_{-,1}(x,t,k_j^*),
		\end{equation}
		
		\subsection{Time evolution} \label{timeevolution}
		Define the solution of the both parts of the Lax pair (\ref{spec})(\ref{time}) as
		\begin{equation}
			\psi(x,t,k)=\phi_\pm(x,t,k)C_\pm(k,t),
			\end{equation}
		where the function $C$ is independent of $x$.
		Taking the derivative with respect to $t$ on  both sides gives
		$$(C_{\pm})_t=R_\pm C_\pm$$
		where $R_\pm =\phi_\pm^{-1}(T\phi_\pm-(\phi_{\pm})_t)$. 
		Since the function $C$ is independent of $x$, the function $R_\pm$ is also independent of $x$. Therefore we have
		\begin{equation}
			R_\pm=\lim_{x\rightarrow\pm\infty}\phi_\pm^{-1}(T\phi_\pm-(\phi_{\pm})_t)=if_\pm(k)\sigma_3,
			\end{equation}
		where $$f_\pm(k)=2(q_\pm^2-2k^2)\lambda_\pm(k).$$
		Thus we can express $(\phi_{\pm})_t$ as
		\begin{equation}
			(\phi_{\pm})_t=T\phi_\pm-\phi_\pm R_\pm.
			\end{equation}
		Taking the derivative with respect to $t$ of Eq. (\ref{scattering}) gives 
		$$S_t=R_+S-SR_-,$$
		element-wise reads
		\begin{equation}
			\begin{split}
			&s_{11}(k,t)=s_{11}(k,0)\exp[it(f_+(k)-f_-(k))],\\
			&s_{21}(k,t)=s_{21}(k,0)\exp[-it(f_+(k)+f_-(k))],
			\end{split}
			\end{equation}
		which result in 
		$$\rho(k,t)=\rho(k,0)\exp(-2itf_+(k))$$
		and 
		$$s_{11}^{'}(k_j,t)=s_{11}^{'}(k_j,0)\exp[it(f_+(k_j)-f_-(k_j))].$$
		Taking the derivative with respect to $t$ of Eq. (\ref{norming}), we have
		$$b_j(t)=b_j(0)\exp[-it(f_+(k_j)+f_-(k_j))],$$
		therefore we obtain
		$$C_j(t)=C_j(0)\exp(-2if_+(k_j)t).$$
		\subsection{Inverse scattering problem}
		Next, we will establish the RH problem associated to the focusing modified KdV equation (\ref{equ}) and recover the potential.
		\subsubsection{Riemann-Hilbert problem}
		
		Define a meromorphic function as 
		
		\begin{equation}
			M(x;k)=\left\{\begin{split}
				&\left[   \frac{\mu_{-,1}}{s_{11}}\hspace{0.1cm},\hspace{0.1cm}\mu_{+,2}    \right],\quad&k\in\C^+\backslash [0,iq_-],\\
				&\left[  \mu_{+,1}  \hspace{0.1cm},\hspace{0.1cm} \frac{\mu_{-,2}}{s_{22}}     \right],\quad&k\in\C^-\backslash [-iq_-,0].
			\end{split}\right.\label{3.1}
		\end{equation}
		
		\noindent \textbf{Jump on  $\R$} 
		For $k\in\R$, we write the RH problem as  $M_+(x,k)=M_-(x,k)V_0(x,k)$, i.e.
		\begin{equation}
			\bigg(  \frac{\mu_{-,1}}{s_{11}}, \mu_{+,2}     \bigg)=\bigg(  \mu_{+,1}, \frac{\mu_{-,2}}{s_{22}}     \bigg)V_0,\quad k\in\R,
		\end{equation}
		where the subscripts $\pm$ denote limiting values from the upper/lower complex plane and  
		\begin{equation}
			V_0(x,k)=\left(\begin{array}{cc}
				(1-\rho\tilde\rho)e^{i(\lambda_--\lambda_+)x} & -\tilde\rho e^{-2i\lambda_+x}\\
				\rho e^{2i\lambda_-x} & e^{i(\lambda_--\lambda_+)x}
			\end{array}\right).
		\end{equation}
		
		\noindent \textbf{Jump on the $\Sigma_+^+$} 
		For $k\in\Sigma_+^+:=[0,iq_+]$, the RH problem can  be written as $M_+(x,k)=M_-(x,k)V_1(x,k)$, where the subscripts $\pm$ denote limiting values from the right/left complex plane, i.e. 
		\begin{equation}
			\bigg(  \frac{\mu_{-,1}}{s_{11}}, \mu_{+,2}     \bigg)=\bigg(  \frac{\mu_{-,1}^-}{s^-_{11}}, \mu_{+,2}^-     \bigg)V_1,\quad k\in\Sigma_+^+.
		\end{equation}
		Due to the scattering relation (\ref{scattering}) and the symmetris, we have
		\begin{equation}
			\begin{split}
				\frac{\phi_{-,1}  }{s_{11}}&= (1-\rho\tilde\rho)\phi_{+,1}+\rho \frac{\phi_{-,2}}{s_{22}}\\
				&=(1-\rho\tilde\rho) \frac{iq_+}  {k-\lambda_+ }\phi_{+,2}^- +\rho \frac{(k+\lambda_+) } {iq_+} \frac{\phi_{-,1}^-}{s_{11}^-}
			\end{split}
		\end{equation}
		and 
		\begin{equation}
			\begin{split}
				\phi_{+,2}&=\frac{\phi_{-,2}}{s_{22}}-\tilde \rho \phi_{+,1}\\
				&=\frac{ k+\lambda_+ }  {iq_+} \frac{\phi_{-,1}^-}{s_{11}^-} - \tilde \rho \frac{iq_+}{k-\lambda_+}\phi_{+,2}^-
			\end{split}
		\end{equation}
		which results in 
		\begin{equation}
			V_1=\left(\begin{array}{cc}
				\frac{k+\lambda_+}{iq_+}\rho e^{2i\lambda_-x}   &   \frac{ k+\lambda_+}  {iq_+}e^{i(\lambda_--\lambda_+)x} \\
				(1-\rho\tilde\rho) \frac{iq_+}  {k-\lambda_+ }e^{i(\lambda_--\lambda_+)x}   &  -\tilde \rho \frac{iq_+}{k-\lambda_+}e^{-2i\lambda_+x}
			\end{array}\right).
		\end{equation}
		\noindent \textbf{Jump on the $\Sigma_+^-$} 
		For $k\in \Sigma_+^-:= [-iq_+, 0]$, the jump can be determined by $M_+(x,k)=M_-(x,k)V_2(x,k)$, where the subscripts $\pm$ denote limiting values from the right/left complex plane,
		i.e.
		\begin{equation}
			\bigg(   \mu_{+,1} , \frac{\mu_{-,2}}{s_{22}}       \bigg)=\bigg(   \mu_{+,1}^- , \frac{\mu_{-,2}^-}{s^-_{22}}     \bigg)V_2,\quad k\in\Sigma_+^-.
		\end{equation}
		Due to (\ref{scattering}) and the symmetris, we have
		\begin{equation}
			\begin{split}
				\phi_{+,1}&=\frac{\phi_{-,1}}{s_{11}}-\rho\phi_{+,2}  \\
				&=\frac{k+\lambda_+}{iq_+} \frac{\phi_{-,2}^-}{s_{22}^-} - \rho \frac{iq_+}{k-\lambda_+}\phi_{+,1}^-
			\end{split}
		\end{equation}
		and 
		\begin{equation}
			\begin{split}
				\frac{\phi_{-,2}}{s_{22}}&=\tilde \rho \phi_{+,1} + \phi_{+,2}\\
				&=\tilde \rho \frac{\phi_{-,1}}{s_{11}} + (1-\rho\tilde \rho)\phi_{+,2}\\
				&=\tilde \rho \frac{k+\lambda_+}{iq_+}\frac{\phi_{-,2}^-}{s_{22}^-}+ (1-\rho\tilde \rho)\frac{iq_+}{k-\lambda_+}\phi_{+,1}^-,
			\end{split}
		\end{equation}
		then we have 
		\begin{equation}
			V_2=\left(\begin{array}{cc}
				- \rho \frac{iq_+}{k-\lambda_+}e^{2i\lambda_+x}  &  (1-\rho\tilde \rho)\frac{iq_+}{k-\lambda_+}  e^{i(\lambda_+-\lambda_-)x}  \\
				\frac{k+\lambda_+}{iq_+} e^{i(\lambda_+-\lambda_-)x}    & \tilde \rho \frac{k+\lambda_+}{iq_+}e^{-2i\lambda_-x}
			\end{array}\right).
		\end{equation}
		\noindent \textbf{Jump on the $\Sigma_0^+$} 
		For $k\in \Sigma_0^+:= [iq_+, iq_-]$, the jump can be determined by $M_+(x,k)=M_-(x,k)V_2(x,k)$, 	where the subscripts $\pm$ denote limiting values from the right/left complex plane,
		i.e.
		\begin{equation}
			\bigg(   \frac{\mu_{-,1}}{s_{11}}, \mu_{+,2}        \bigg)=\bigg(    \frac{\mu^-_{-,1}}{s^-_{11}}, \mu_{+,2}     \bigg)V_3,\quad k\in \Sigma_0^+,
		\end{equation}
		Due to (\ref{scattering}) and the symmetris, we have
		\begin{equation}
			\begin{split}
				\frac{\phi_{-,1}}{s_{11}}&= \phi_{+,1}+ \rho \phi_{+,2}\\
				&=\frac{\phi_{-,2}}{s_{12}}+ (\rho-\frac{1}{\tilde\rho})\phi_{+,2}\\
				&=\frac{\phi_{-,1}^-}{s_{11}^-}+(\rho-\frac 1 {\tilde\rho})\phi_{+,2},
			\end{split}
		\end{equation}
		that is
		\begin{equation}
			V_3=\left(\begin{array}{cc}
				e^{2i\lambda_-x} & 0 \\
				\big(\rho-\frac{1}{\tilde\rho}\big)e^{i(\lambda_--\lambda_+)x} & 1
			\end{array}\right)
		\end{equation}
		\noindent \textbf{Jump on the $\Sigma_0^-$} 
		For $k\in \Sigma_0^-:= [-iq_-, -iq_+]$, the jump can be determined by $M_+(x,k)=M_-(x,k)V_4(x,k)$, 	where the subscripts $\pm$ denote limiting values from the right/left complex plane, 
		i.e.
		\begin{equation}
			\bigg(   \mu_{+,1} , \frac{\mu_{-,2}}{s_{22}}       \bigg)=	\bigg(   \mu_{+,1} , \frac{\mu^-_{-,2}}{s^-_{22}}       \bigg)V_4,\quad k\in \Sigma_0^-.
		\end{equation}
		Due to (\ref{scattering}) and the symmetris, we have
		\begin{equation}
			\begin{split}
				\frac{\phi_{-,2}}{s_{22}}& =\tilde \rho\phi_{+,1}+ \phi_{+,2}\\
				&=(\tilde\rho-\frac{1}{\rho})\phi_{+,1}+\frac{\phi_{-,1}}{s_{21}} \\
				&=(\tilde\rho-\frac{1}{\rho})\phi_{+,1} - \frac{\phi_{-,2}^-}{s_{22}^-},
			\end{split}
		\end{equation}
		that is
		\begin{equation}
			V_4=\left(\begin{array}{cc}
				1 & (\tilde\rho-\frac{1}{\rho} )e^{-i(\lambda_-+\lambda_+)x} \\
				0   & -e^{-2i\lambda_-x}
			\end{array}\right).
		\end{equation}
		Summarize the results, we have
		
		\noindent\textbf{RH problem 1.} Find a $2\times 2$ matrix-valued function $ M(x;k)$ with the following properties:
		\begin{itemize}
			\item $M(x;\cdot):\ \C \backslash\Sigma\rightarrow \C ^{2\times 2}$ is analytic, where $\Sigma=\R\cup \Sigma_-$.
			\item The limits of $M(x;k)$ as k approaches $\Sigma$ from the positive and negative side  exist and are continuous on $\Sigma$, moreover, they satisfy
			\begin{equation}
				M_+(x;k)=M_-(x;k)V(x;k),\quad k\in\Sigma,
				\label{jump}\end{equation}
			where 
			$$ V=\left\{\begin{split}
				&V_0, \quad k\in\R,\\
				&V_1, \quad k\in\Sigma_+^+,\\
				&V_2, \quad k\in\Sigma_+^-,\\
				&V_3, \quad k\in \Sigma_0^+,\\
				&V_4, \quad k\in \Sigma_0^-,
			\end{split}
			\right.
			$$
			\item $M(x;k)$ satisfies the asymptotic(see (53) in \cite{zhangyan})
			$$M(x;k) \rightarrow I,\quad |k|\rightarrow\infty, k\in\C\backslash\Sigma.$$
			$$M(x;k) \rightarrow \frac i k \sigma_3Q_{-}+O(1), \quad k\rightarrow 0.$$
		\end{itemize}
		
		\subsubsection{Reconstruction formula}
		Using the Plemelj formula, we can express the solution $M$ as follows:
		\begin{equation}
			\begin{split}
				M(x,t,k)=&I+\frac i k \sigma_3Q_{-}+\sum_{j=1}^J \frac{1}{k-k_j}\Res_{k=k_j}M^++ \sum_{j=1}^J \frac{1}{k-k_j^*}\Res_{k=k_j^*}M^-\\
				&+\frac{1}{2\pi i}\int_{\Sigma}\frac{[M^-(V-I)](x,t,\zeta)}{\zeta-k}\d \zeta,\quad k\in\C\backslash\Sigma,
			\end{split}\label{2.41}
		\end{equation}
		
		By a standard argument, we can recover the potential $q(x,t)$ 
		\begin{equation}
			q(x,t)=-i\lim_{\substack{
					k\rightarrow\infty}}kM_{12}(x,t,k).
		\end{equation} 
		Substituting (\ref{2.41}) into it, we obtain
		\begin{equation}
			q(x,t)=q_-(t)-i\sum_{j=1}^J C_j^*(t)e^{i  (\lambda_{+,j}^*+\lambda_{-,j}^*)  x   }M_{11}(x,t,k_j^*)		
			+\frac{1}{2\pi }\int_{\Sigma}\bigg[   M^-(V-I)  
			\bigg]_{12}{\d}   k
		\label{rec}\end{equation}
	\noindent\textbf{Remark 2.1}  When $q_-=q_+=q_0$, the reconstruction formula (\ref{rec}) matches Eq. (65) in \cite{zhangyan}. 
	
		\section{The defocusing mKdV equation}\label{sec:section3}
		
		In this  section, we consider the defocusing case, i.e. $\sigma=1$. 
		First, let us establish and define some notations that will be used throughout this section.
		\begin{equation}\begin{split}
				&\Sigma_{\pm}=(-\infty,-q_{\pm}]\cup [q_{\pm},\infty),\\
				&\Sigma_0=[-q_-, -q_+]\cup [q_+, q_-],\\
				&\mathring{\Sigma}_{\pm}= (-\infty,-q_{\pm})\cup (q_{\pm},\infty),        \\
				& \mathring{\Sigma}_{0}= (-q_-, -q_+)\cup (q_+, q_-).    
				\nonumber    \end{split}\end{equation}
		
		\begin{center}
			\begin{tikzpicture}
				\draw [->](-6.2,0)--(1.8,0);

				\draw [->](-2.2,-2.5)--(-2.2,2.5);

				\draw [-](-4.2,0)--(-4.2,0.08);
				\node [thick] [below]  at (-4.2,0){\footnotesize $-q_-$};
				
				\draw [-](-3.2,0)--(-3.2,0.08);
				\node [thick] [below]  at (-3.2,0){\footnotesize $-q_+$};
				
				\draw [-](-1.2,0)--(-1.2,0.08);
				\node [thick] [below]  at (-1.2,0){\footnotesize $q_+$};
				
				\draw [-](-0.2,0)--(-0.2,0.08);
				\node [thick] [below]  at (-0.2,0){\footnotesize $q_-$};
				
				\draw[dashed] [blue] [ultra thick] (-6.2,0.1)--(-4.2,0.1);
				\draw[dashed] [blue] [ultra thick] (-0.2,0.1)--(1.8,0.1);
				\node[blue] [ultra thick] [above]  at (-5.2,0.2){\footnotesize $\Sigma_-$};
				
				\draw[dashed] [red] [ultra thick]  (-6.2,-0.1)--(-3.2,-0.1);
				\draw[dashed] [red] [ultra thick] (-1.2,-0.1)--(1.8,-0.1);
				\node[red] [ultra thick] [above]  at (-1.2,-1.1){\footnotesize $\Sigma_+$};
				%
				%
				%
				\node [thick] [above]  at (-2.2,2.5){\footnotesize ${\rm Im}k$};
				\node [thick] [above]  at (2.1,-0.2){\footnotesize ${\rm Re}k$};
			\end{tikzpicture}
			\center{\footnotesize {\bf Fig.~2.} The branch cuts $\Sigma_-$ and $\Sigma_+$ to the defocusing case}
		\end{center}
		
		\subsection{Direct scattering problem}
			Next, we  will develop the spectral analysis and introduce the scattering matrix to prepare for the set-up of the RH problem.
		\subsubsection{Spectral analysis}

		To obtain the Jost functions of the spectral problem (\ref{spec}), we need to diagonalize the asymptotic matrix $X_{\pm}=ik\sigma_3+Q_{\pm}$. By calculation, the eigenvalues of the  matrix $X_{\pm}$ are $\pm i \lambda_{\pm}$, where
		$$\lambda_{\pm}^2=k^2-q_{\pm}^2.$$
		To avoid the multivalues  of the function $\lambda_{\pm}$, we take the branch cut for $\lambda_{\pm}$ on $\Sigma_{\pm}$. For all $k\in\C\backslash\Sigma_{\pm}$, we define $\lambda_{\pm}$ as the single-valued, analytic and continuous as $k$ approaches $\Sigma_\pm$ from above. For $k\in\Sigma_\pm$,  we take the positive value of the real square root for $\lambda_\pm$.

		The  eigenvector matrix is 
		$$E_{\pm}=I+\frac{i}{k-\lambda_{\pm}}\sigma_3 Q_{\pm}$$ corresponding to $\mp i\lambda_{\pm}$
		and $\det E_\pm=-\frac{2\lambda_\pm}{k-\lambda_\pm}\triangleq \gamma_\pm(k)$.
		
		After diagonalizition,  we have 
		\begin{equation}
			\phi_{\pm}\sim E_{\pm}e^{-i\lambda_{\pm}x\sigma_3},\quad x\rightarrow \pm\infty.
		\end{equation}
		Define the normalized Jost functions $\mu_{\pm}$ as
		\begin{equation} 
			\mu_{\pm}=\phi_{\pm}e^{i\lambda_{\pm}x\sigma_3},
			\label{2}
		\end{equation}
		then we have $$\mu_{\pm}\rightarrow E_{\pm},\quad x\rightarrow \pm \infty.$$ 
		And they satisfy the Volterra integral equation
		\begin{equation}
			\mu_{\pm}=E_{\pm}(k)+\int_{\pm\infty}^x E_{\pm}(k)e^{-i\lambda_{\pm}(x-y)\hat\sigma_3} \left[
			E_{\pm}^{-1}(k)\Delta Q_{\pm}(y) \mu_{\pm}(y,k) 
			\right] \d y
			\label{integral}
		\end{equation}
		with $\Delta Q_{\pm}=Q-Q_{\pm}$.
		
		\begin{lemma}\label{2.1}  If $(1+|x|)(q-q_{\pm})\in L^1(\R_{\pm})$, then
			\begin{itemize}
				
				\item 
				
				Analyticity: the function $\mu_{-,1}(x,t,k)$ is well-defined in $\Sigma_-$ and analytic in $\C\backslash\Sigma_-$, whereas the function $\mu_{+,2}(x,t,k)$ is well-defined in $\Sigma_+$ and analytic in $\C\backslash\Sigma_+$.
				
				\item Symmetry: Let $\widetilde \phi_{\pm}=\phi_{\pm}(x,t,-\lambda_{\pm}(k))$, we have
				\begin{equation}
					\phi_{\pm}=\widetilde \phi_{\pm}\frac{i}{k-\lambda_{\pm}}\sigma_3 Q_{\pm},\quad k\in\mathring{\Sigma}_{\pm},
				\end{equation}
				Column-wise reads
				\begin{align}
					&\phi_{\pm,1}=\frac{-iq_{\pm}}{k-\lambda_\pm}\widetilde \phi_{\pm,2}, \quad k\in\mathring{\Sigma}_{\pm},\label{2.6}\\
					&\phi_{\pm,2}=\frac{iq_{\pm}}{k-\lambda_\pm}\widetilde \phi_{\pm,1}, \quad k\in\mathring{\Sigma}_{\pm}.\label{2.7}
				\end{align}
				
			\end{itemize}
		\end{lemma}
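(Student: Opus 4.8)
The plan is to treat the two assertions separately: analyticity is obtained from the Neumann series of the Volterra equation (\ref{integral}) together with a branch analysis of $\lambda_\pm$, while the symmetry is obtained by the same branch-comparison argument used in the focusing case.

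\textbf{Analyticity.} I would expand $\mu_\pm=\sum_{n\ge0}\mu_\pm^{(n)}$ by iterating (\ref{integral}). Reading the kernel column-wise and using that $e^{-i\lambda_\pm(x-y)\hat\sigma_3}$ multiplies the off-diagonal entries by $e^{\mp2i\lambda_\pm(x-y)}$, one sees that the only exponential entering the first column of $\mu_-$ is $e^{2i\lambda_-(x-y)}$ (the integration runs over $y<x$), and the only one entering the second column of $\mu_+$ is $e^{2i\lambda_+(y-x)}$ (integration over $y>x$); in both cases the exponent carries $\lambda_\pm$ times a nonnegative quantity. The geometric input is that, with the branch obtained by continuing the positive root of $\sqrt{k^2-q_\pm^2}$ across the gap $(-q_\pm,q_\pm)$, one has $\Im\lambda_\pm\ge0$ on all of $\C\backslash\Sigma_\pm$, with equality exactly on $\Sigma_\pm$. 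This I would verify by noting that $\lambda_\pm$ is never real in the open set $\C\backslash\Sigma_\pm$ (since $\lambda_\pm^2=k^2-q_\pm^2$ is real and nonnegative only for $k\in\Sigma_\pm$), so $\Im\lambda_\pm$ has constant sign on that connected domain, fixed to be positive by the positive-root normalization on the gap. Hence the relevant exponentials have modulus $\le1$; combining this with boundedness of $E_\pm$ and $E_\pm^{-1}$ away from the branch points and with the hypothesis $(1+|x|)(q-q_\pm)\in L^1(\R_\pm)$, the Weierstrass $M$-test yields locally uniform convergence of the series on compact subsets of $\C\backslash\Sigma_\pm$. Since each $\mu_\pm^{(n)}$ is analytic there, the limit is analytic; on $\Sigma_\pm$ the exponentials have modulus one, so the series still converges absolutely and defines the boundary values, which is exactly the assertion that $\mu_{-,1}$ (resp. $\mu_{+,2}$) is well-defined on $\Sigma_-$ (resp. $\Sigma_+$).

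\textbf{Symmetry.} Here I would follow the focusing argument verbatim. Since $X=ik\sigma_3+Q$ depends on $k$ but not on the choice of square root $\lambda_\pm$, if $\phi_\pm$ solves (\ref{spec}) then so does $\widetilde\phi_\pm=\phi_\pm(x,t,-\lambda_\pm)$, whose asymptotics follow from those of $\phi_\pm$ by $\lambda_\pm\mapsto-\lambda_\pm$, i.e. $\widetilde\phi_\pm\sim(I+\frac{i}{k+\lambda_\pm}\sigma_3Q_\pm)e^{i\lambda_\pm x\sigma_3}$. Comparing with $\phi_\pm\sim E_\pm e^{-i\lambda_\pm x\sigma_3}$ and using that $\sigma_3Q_\pm$ is off-diagonal (so it anticommutes with $\sigma_3$ and intertwines the two exponentials) reduces the claim to the algebraic identity $(I+\frac{i}{k+\lambda_\pm}\sigma_3Q_\pm)\frac{i}{k-\lambda_\pm}\sigma_3Q_\pm=E_\pm$. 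This follows from $(\sigma_3Q_\pm)^2=-\sigma q_\pm^2I=-q_\pm^2I$ and $k^2-\lambda_\pm^2=q_\pm^2$, which collapse the quadratic term to $I$. As two solutions of the same linear system with identical asymptotics must coincide, we get $\phi_\pm=\widetilde\phi_\pm\frac{i}{k-\lambda_\pm}\sigma_3Q_\pm$ on $\mathring\Sigma_\pm$; reading off the two columns of $\frac{i}{k-\lambda_\pm}\sigma_3Q_\pm$, whose nonzero entries are $\pm\frac{iq_\pm}{k-\lambda_\pm}$ with the sign dictated by $\sigma=1$, produces (\ref{2.6})--(\ref{2.7}).

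\textbf{Main obstacle.} The symmetry is routine algebra, so the substantive part is the analyticity, and within it the branch analysis. The delicate point, and the feature that distinguishes the defocusing case from the focusing one, is establishing the \emph{global} sign $\Im\lambda_\pm\ge0$ throughout the cut plane (rather than only on a half-plane), and then controlling the factors $E_\pm$ and $E_\pm^{-1}$ near the branch points $k=\pm q_\pm$, where $\lambda_\pm=0$ and $\gamma_\pm$ vanishes. One must check that the combination appearing in the series acquires no spurious singularity there, so that analyticity genuinely continues across the gap $(-q_\pm,q_\pm)$ to the whole of $\C\backslash\Sigma_\pm$.
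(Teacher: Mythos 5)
Your proposal is correct, and where it overlaps with what the paper actually proves it takes the same route. In fact the paper gives no proof at all of this (defocusing) lemma; the only proof supplied is for the focusing analogue, and it covers the symmetry alone. There, exactly as you do, the authors note that $\widetilde\phi_\pm$ solves the same spectral problem, compare the asymptotics $\widetilde\phi_\pm\sim\left(I+\tfrac{i}{k+\lambda_\pm}\sigma_3Q_\pm\right)e^{i\lambda_\pm x\sigma_3}$, and conclude $\phi_\pm=\widetilde\phi_\pm\tfrac{i}{k-\lambda_\pm}\sigma_3Q_\pm$ by a ``direct check''; your version spells out that check (the off-diagonal intertwining, $(\sigma_3Q_\pm)^2=-q_\pm^2I$ together with $k^2-\lambda_\pm^2=q_\pm^2$, and uniqueness of solutions with prescribed oscillatory asymptotics on $\mathring{\Sigma}_\pm$), and you correctly obtain the sign asymmetry between (\ref{2.6}) and (\ref{2.7}) that distinguishes $\sigma=1$ from the focusing case. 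For the analyticity the paper is silent, so your Neumann-series argument supplies what is missing; it is the standard one (cf.\ \cite{biondini}), and its key geometric input --- that $\Im\lambda_\pm$ never vanishes on the connected set $\C\backslash\Sigma_\pm$, hence has a constant sign fixed to be positive by the branch normalization --- is sound and consistent with the paper's stated large-$k$ asymptotics ($\lambda_\pm\sim k$ in $\C^+$, $\lambda_\pm\sim -k$ in $\C^-$). The one step you leave schematic is the one you flag yourself: at the branch points $k=\pm q_\pm$ one has $\lambda_\pm=0$ and $\det E_\pm=\gamma_\pm=0$, so $E_\pm^{-1}$ in the Volterra kernel blows up and the uniform bounds on the iterates degrade; this is precisely where the weight in the hypothesis $(1+|x|)(q-q_\pm)\in L^1(\R_\pm)$ must be used to get well-definedness on all of $\Sigma_\pm$ including the endpoints. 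Since the paper proves none of this, the omission does not put you behind it, but a complete write-up would need to carry out that endpoint estimate.
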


		\subsubsection{Scattering matrix}
		There exists a scattering matrix $S(k,t)$ such that 
		\begin{equation}
			\phi_-(x,t,k)=\phi_+(x,t,k)S(k,t),\quad k\in\Sigma_-.
			\label{scattering}
		\end{equation}
		Thus, we have $\det S=\frac{\gamma_-}{\gamma_+}$. And the  scattering coefficients can be written as the Wronskians expressions
		\begin{equation}
			\begin{split}
				&s_{11}=\frac{\Wr(\phi_{-,1},\phi_{+, 2})}{\gamma_+},\quad  s_{12}=\frac{\Wr(\phi_{-,2},\phi_{+, 2})}{\gamma_+},\\
				&s_{21}=\frac{\Wr(\phi_{+,1},\phi_{-, 1})}{\gamma_+},\quad  s_{12}=\frac{\Wr(\phi_{+,1},\phi_{-, 2})}{\gamma_+}.
			\end{split}
		\end{equation}
		Define
		\begin{equation}
			\rho=\frac{s_{21}}{s_{11}},\quad \tilde \rho=\frac {s_{12}} {s_{22}}.
		\end{equation}
		\begin{lemma}\label{2.2}
		 Let $s_{11}^-(k)=\lim_{\epsilon\uparrow 0} s_{11}(k+i\epsilon, t)$, then we have 
				\begin{equation}
					\begin{split}
						&s_{11}^-(k)=s_{22}\frac{( k+\lambda_+ )(k-\lambda_-)}{  q_+q_- },\quad k\in\Sigma_-,\\
						&s_{11}^-(k)=s_{21}(k)\frac{k+\lambda_+}{-iq_+},\quad k\in\Sigma_0.
					\end{split}\label{2.17}
				\end{equation}
		\end{lemma}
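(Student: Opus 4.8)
The plan is to follow the Wronskian-based strategy already used for the focusing case, adapting it to the defocusing branch-cut configuration of Fig.~2. The two essential inputs are the Wronskian representations of the scattering coefficients together with $\gamma_+=-\frac{2\lambda_+}{k-\lambda_+}$, and the column-wise symmetry relations (\ref{2.6})--(\ref{2.7}). The entire computation hinges on correctly tracking which quantities pick up a jump as $k$ crosses each portion of the contour from below, and this is dictated by the inclusions $\Sigma_-\subset\Sigma_+$ and $\mathring\Sigma_0\subset\Sigma_+$ together with $\mathring\Sigma_0\cap\Sigma_-=\emptyset$, which are specific to the defocusing geometry.

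First I would record the boundary behaviour under the limit from below ($\epsilon\uparrow 0$). Because $\lambda_\pm$ is fixed to be continuous from above on $\Sigma_\pm$ and changes sign across its own cut, one has $\lambda_\pm^-=-\lambda_\pm$ whenever $k$ sits on $\Sigma_\pm$, and hence $\gamma_+^-=\frac{2\lambda_+}{k+\lambda_+}$. In the same vein, the limiting value from below of a Jost column that jumps is obtained by sending $\lambda\mapsto-\lambda$, i.e.\ it coincides with the tilded function: by (\ref{2.6}) one gets $\phi_{+,2}^-=\widetilde\phi_{+,2}=\frac{\lambda_+-k}{iq_+}\phi_{+,1}$, and by (\ref{2.7}) one gets $\phi_{-,1}^-=\widetilde\phi_{-,1}=\frac{k-\lambda_-}{iq_-}\phi_{-,2}$.

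For $k\in\Sigma_-$ both $\lambda_-$ and, since $\Sigma_-\subset\Sigma_+$, also $\lambda_+$ flip sign, so in $s_{11}^-=\Wr(\phi_{-,1}^-,\phi_{+,2}^-)/\gamma_+^-$ both columns must be replaced by the tilded expressions above. Pulling out the scalar prefactors and using bilinearity reduces the numerator to $\Wr(\phi_{-,2},\phi_{+,1})=-\gamma_+s_{22}=\frac{2\lambda_+}{k-\lambda_+}s_{22}$; collecting the powers of $\lambda_+$, cancelling $\frac{\lambda_+-k}{k-\lambda_+}=-1$, and using $(iq_-)(iq_+)=-q_+q_-$ produces the first identity with denominator $q_+q_-$. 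For $k\in\Sigma_0$ the situation is different: here $\phi_{-,1}$ is analytic because $\mathring\Sigma_0\cap\Sigma_-=\emptyset$, so $\phi_{-,1}^-=\phi_{-,1}$ carries no jump, whereas $\phi_{+,2}$ still jumps (as $\mathring\Sigma_0\subset\Sigma_+$) and is replaced by $\frac{\lambda_+-k}{iq_+}\phi_{+,1}$. The Wronskian then collapses by antisymmetry to $\Wr(\phi_{-,1},\phi_{+,1})=-\gamma_+s_{21}=\frac{2\lambda_+}{k-\lambda_+}s_{21}$, and simplification with $\gamma_+^-=\frac{2\lambda_+}{k+\lambda_+}$ gives the second identity.

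The step I expect to be the main obstacle is the jump bookkeeping: deciding, for each segment of the contour, exactly which of the four Jost columns and which factors inside $\gamma_+$ flip sign. In the defocusing geometry the cuts $\Sigma_\pm$ and $\Sigma_0$ all lie on the real axis and overlap differently from the focusing case, so one cannot simply transcribe the earlier proof. A secondary point of care is sign tracking: the identities $\frac{\lambda_+-k}{k-\lambda_+}=-1$ and $(iq_-)(iq_+)=-q_+q_-$ must be combined correctly, and it is precisely the sign in the defocusing relation (\ref{2.6}), which differs from its focusing counterpart, that turns the denominator into $+q_+q_-$ rather than $-q_+q_-$.
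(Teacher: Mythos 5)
Your proposal is correct and takes essentially the same route as the paper: the paper proves only the focusing analogue of this lemma (via the Wronskian expressions, the limiting value of $\gamma_+$, and the column-wise symmetry relations) and states the defocusing version without proof, and your argument is exactly that computation transplanted to the defocusing geometry. Your sign bookkeeping checks out — in particular $\gamma_+^-=\frac{2\lambda_+}{k+\lambda_+}$, the replacement of $s_{12}$ by $s_{21}$ on $\Sigma_0$ forced by $\mathring\Sigma_0\cap\Sigma_-=\emptyset$ with $\mathring\Sigma_0\subset\Sigma_+$, and the denominator $+q_+q_-$ coming from the minus sign in the defocusing symmetry relation for $\phi_{+,1}$.
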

		

			\subsubsection{Discrete spectrum and residue condition}
			
			Assume that there exists no spectral singularities. We can prove that the discrete eigenvalues don't exist in the interval $\mathring\Sigma_0$, also can not be the branch points $\pm q_\pm$ (see more details in \cite{biondini}). And we assume 
			$s_{11}(k,t)$ has $J$ simple zeros in $(-q_+, q_+)$ denoted by $k_j,\hspace{0.1cm} j=1,2,\ldots, J$. From the Wronskians expression of $s_{11}(k,t)$, there exists a constant only depending on $t$ such that
			$$\phi_{-,1}(x,t,k_j)=b_j(t)\phi_{+,2}(x,t,k_j).$$
			On account of (\ref{2}), we have
			$$\mu_{-,1}(x,t,k_j)=b_j(t)\mu_{+,2(x,t,k_j)}e^{i(\lambda_{+,j}+\lambda_{-,j})x},$$
			where $\lambda_{\pm,j}=\lambda_\pm(k_j)$.
			Thus for $j=1,2,\ldots,J$, we get
			\begin{equation}
				\Res_{k=k_j}\bigg[   \frac{\mu_{-,1}(x,t,k)}{s_{11}(k,t)}    \bigg]=\frac{\mu_{-,1}(k_j)}{s_{11}^{'}(k_j)}=C_j(t)e^{i(\lambda_{+,j}+\lambda_{-,j})x}\mu_{+,2}(x,t,k_j),
			\end{equation}
			with $C_j=\frac{b_j}{s_{11}^{'}(k_j)}$ .
			Therefore, we have
			\begin{equation}
				\Res_{k=k_j} M(x,t,k)= \bigg(   
				C_j e^{i(\lambda_{+,j}+\lambda_{-,j})x} M_2(k_j) \hspace{0.1cm}, \hspace{0.1cm} 0
				\bigg).
			\end{equation}
		
		\subsection{Time evolution}
		Performing a similar maniputation of the time evolution \ref{timeevolution} in the focusing case, we obtain 
 $$f_\pm(k)=-2(q_\pm^2+2k^2)\lambda_\pm(k).$$
and 
\begin{equation}
	\begin{split}
		&s_{11}(k,t)=s_{11}(k,0)\exp[it(f_+(k)-f_-(k))],\\
		&s_{21}(k,t)=s_{21}(k,0)\exp[-it(f_+(k)+f_-(k))],
	\end{split}
\end{equation}
$$\rho(k,t)=\rho(k,0)\exp(-2itf_+(k)),$$
$$s_{11}^{'}(k_j,t)=s_{11}^{'}(k_j,0)\exp[it(f_+(k_j)-f_-(k_j))],$$
$$b_j(t)=b_j(0)\exp[-it(f_+(k_j)+f_-(k_j))].$$
Therefore we obtain
$$C_j(t)=C_j(0)\exp(-2if_+(k_j)t).$$

			\subsection{Inverse scattering problem}
			Next, we will establish the RH problem associated to the defocusing modified KdV equation (\ref{equ}) and recover the potential.
			\subsubsection{Riemann-Hilbert problem}
			Define a meromorphic function as 
			\begin{equation}
				M(x,t,k)=	\left(
				\frac{\mu_{-,1}}{s_{11}}\hspace{0.1cm}, \hspace{0.1cm} \mu_{+,2}
				\right),\quad k\in\Sigma_+.
				\end{equation}

			\noindent \textbf{Jump on the $\Sigma_-$} 
			Using  the scattering relation (\ref{scattering}),  the second symmetry (\ref{2.6}) (\ref{2.7}) of the Jost function and (\ref{2.17}) of the scattering coefficients, we have 
			\begin{equation}
				\begin{split}
					\frac{\phi_{-,1}}{s_{11}}=&(1-\rho\tilde \rho)\phi_{+,1}+ \rho\frac{\phi_{-,2}}{s_{22}}\\
					=&(1-\rho \tilde \rho) \frac{-iq_+}{k-\lambda_+} \widetilde \phi_{+,2}+ \rho \frac{(k+\lambda_+)(k-\lambda_-)}{q_+q_-}\frac{iq_-}{k-\lambda_-} \frac{\tilde \phi_{-,1}}{s_{11}^-}\\
					=&(1-\rho \tilde \rho) \frac{-iq_+}{k-\lambda_+} \widetilde \phi_{+,2}+\rho \frac{i(k+\lambda_+)}{q_+}\frac{\widetilde \phi_{-,1}}{s_{11}^-}
				\end{split}\label{2.24}
			\end{equation}
			and 
			\begin{equation}
				\begin{split}
					\phi_{+,2}=&\frac{\phi_{-,2}}{s_{22}}-\tilde \rho \phi_{+,1}\\
					=&\frac{(k+\lambda_+)(k-\lambda_-)}{q_+q_-} \frac{iq_-}{k-\lambda_-}   \frac{\widetilde \phi_{-,1}}{s_{11}^-}
					-\tilde \rho\frac{ -iq_+}{k-\lambda_+}\widetilde \phi_{+,2}\\
					=&\frac{i(k+\lambda_-)}{q_+} \frac{ \tilde \phi_{-,1} }   { s_{11}^-  }+\tilde \rho \frac{iq_+}{k-\lambda_+}\widetilde \phi_{+,2}.
				\end{split}\label{2.25}
			\end{equation}
			Combining (\ref{2.24}) and (\ref{2.25}) leads to 
			\begin{equation}
				\left(   \frac { \phi_{-,1}  }   { s_{11} }\hspace{0.1cm},\hspace{0.1cm} \phi_{+,2}   \right)= 	\left(   \frac { \phi_{-,1}^-   }   { s_{11}^- }\hspace{0.1cm},\hspace{0.1cm} \phi_{+,2}^-    \right)
				\left(\begin{array}{cc}
					\rho \frac{i(k+\lambda_+)}{q_+}  &  \frac{i(k+\lambda_+)}{q_+}\\
					(1-\rho\tilde\rho)\frac{-iq_+}{k-\lambda_+} & \tilde\rho\frac{iq_+}{k-\lambda_+}
				\end{array}\right)
				,\quad k\in\Sigma_-.
			\end{equation}
			\noindent \textbf{Jump on the $\Sigma_0$} According to (\ref{2.17}), we get
			\begin{equation}
				\frac{\phi_{-,1}}{s_{11}}=\frac{\phi_{-,1}}{s_{11}^-}\frac{s_{11}^-}{s_{11}}=\rho \frac{k+\lambda_+}{-iq_+}\frac{\phi_{-,1}}{s_{11}^-}.
				\label{2.27}
			\end{equation}
			Again, using the scattering relation (\ref{scattering}) and (\ref{2.17}), we have
			\begin{equation}
				\begin{split}
					\phi_{+,2}=&\frac{1}{s_{21}}\phi_{-,1}-\frac{1}{\rho}\phi_{+,1}\\
					=&\frac{k+\lambda_+}{-iq_+}\frac{\phi_{-,1}}{s_{11}^-}-\frac{1}{\rho} \frac{-iq_+}{k-\lambda_+}\widetilde \phi_{+,2}.
				\end{split}\label{28}
			\end{equation}
			Combining (\ref{2.27}) and (\ref{28}) leads to 
			\begin{equation}
				\left(   \frac { \phi_{-,1}  }   { s_{11} }\hspace{0.1cm},\hspace{0.1cm} \phi_{+,2}   \right)= 	\left(   \frac { \phi_{-,1}   }   { s_{11}^- }\hspace{0.1cm},\hspace{0.1cm} \phi_{+,2}^-    \right)
				\left(\begin{array}{cc}
					\rho \frac{(k+\lambda_+)}{-iq_+}  &  \frac{(k+\lambda_+)}{-iq_+}\\
					0 & \frac{1}{\rho}\frac{iq_+}{k-\lambda_+}
				\end{array}\right)
				,\quad k\in\Sigma_0.
			\end{equation}
			Note the relation for $\mu_{\pm}$ and $\phi_\pm$ and  calculate direct, we arrive at
			\begin{equation}
				\left(
				\frac{\mu_{-,1}}{s_{11}}\hspace{0.1cm}, \hspace{0.1cm} \mu_{+,2}
				\right)=\left\{ \begin{aligned}
					& 	\left(	\frac{\mu_{-,1}^-}{s_{11}^-}\hspace{0.1cm}, \hspace{0.1cm} \mu_{+,2}^-
					\right) V_{\Sigma_-},\quad k\in\Sigma_-
					\\
					& \left(	\frac{\mu_{-,1}}{s_{11}^-}\hspace{0.1cm}, \hspace{0.1cm} \mu_{+,2}^-
					\right) V_{\Sigma_0},\quad k\in\Sigma_0
				\end{aligned}\right.
			\end{equation}
			where 
			\begin{equation}
				\begin{split}
					&V_{\Sigma_-}=(E_+-I)\left(\begin{array}{cc}
						e^{-i\lambda_+x} &  \\
						&   e^{i\lambda_-x}
					\end{array}\right)
					\left(\begin{array}{cc}
						1-\rho\tilde\rho  &  -\tilde\rho\\
						\rho & 1 
					\end{array}\right)
					\left(\begin{array}{cc}
						e^{i\lambda_-x} &  \\
						&   e^{-i\lambda_+x}
					\end{array}\right),\\
					&V_{\Sigma_0}=(E_+-I)\left(\begin{array}{cc}
						e^{-i\lambda_+x} &  \\
						&   e^{-i\lambda_-x}
					\end{array}\right)
					\left(\begin{array}{cc}
						0  &  \frac{1}{\rho}\\
						\rho & - 1 
					\end{array}\right)
					\left(\begin{array}{cc}
						e^{i\lambda_-x} &  \\
						&   e^{-i\lambda_+x}
					\end{array}\right).
				\end{split}\nonumber
			\end{equation}
			Therefore, we obtain a jump condition 
			\begin{equation}
				M^+=M^-(E_+-I)(I-V_0),\quad k\in\Sigma_+,
				\label{31}
			\end{equation}
			with
			\begin{equation}
				V_0=\left\{\begin{aligned}
					&\left(\begin{array}{cc}
						1-(1-\rho\tilde\rho)e^{i(\lambda_--\lambda_+)x} & \tilde\rho e^{-2i\lambda_+} \\
						\rho e^{2i\lambda_-} & 1-e^{-i(\lambda_--\lambda_+)x}
					\end{array}\right),\quad k\in\Sigma_-\\
					&\left(\begin{array}{cc}
						1  & \frac{1}{\rho}e^{-2i\lambda_+} \\
						-\rho & 1-e^{-i(\lambda_++\lambda_-)x}
					\end{array}\right),\quad k\in\Sigma_0
				\end{aligned}	\right.
				\nonumber
			\end{equation}
			Due to the different asymptotics as $k\rightarrow\infty$ between $k\in\C^+$  and  $k\in\C^-$
			\begin{equation}
				\lambda_{\pm}=\left\{
				\begin{aligned}
					k-\frac{q_\pm^2}{2k}+o\bigg(\frac 1 k\bigg),\quad &k\rightarrow \infty,\quad \Im k>0,\\
					-k+\frac{q_\pm^2}{2k}+o\bigg(\frac 1 k\bigg),\quad &k\rightarrow \infty,\quad \Im k<0,
				\end{aligned} 
				\right.
			\end{equation}
			we obtain the asymptotic of $M(x,t,k)$
			\begin{equation}
				M(x,t,k)=\left\{
				\begin{aligned}
					&I+O(\frac 1 k),\quad & k\rightarrow \infty,\quad \Im k>0,\\
					&\frac{i}{k-\lambda_+}\sigma_3Q_++O(1),\quad & k\rightarrow \infty,\quad \Im k<0.
				\end{aligned}
				\right.
			\end{equation}
			which can be rewritten as
			$$ M(x,t,k)=E_++O(1),\quad k\rightarrow \infty.$$
			To deal with the different limit, we introduce a new function $N(x,t,k)$ 
			\begin{equation}
				M(x,t,k)=N(x,t,k)E_+(k,t),
				\label{34}
			\end{equation}
			thus we have 
			$$N(x,t,k)=I+O(\frac 1k).$$
			On account of the jump (\ref{31}), we have
			\begin{equation}
				N^+=N^-\tilde V,\quad k\in\Sigma_+,
			\end{equation}
			where $\tilde V=E_+^-VE_+^{-1}$.
			\subsubsection{Reconstruction formula}
			Using the Plemelj formula, we can express the solution $M$ as follows:
			\begin{equation}
				\begin{split}
					N(x,t,k)=&I+\sum_{j=1}^J\frac{1}{k-k_j}\Res_{k=k_j}N(x,t,k)\\
					&-\frac{1}{2\pi i}\int_{\Sigma_+}\frac{[N^-(I-\tilde V)](x,t,\zeta)}{\zeta-k}\d \zeta,\quad k\in\C\backslash\Sigma_+,
				\end{split}
			\end{equation}
			from (\ref{34}), we obtain
			\begin{equation}
				\begin{split}
					M(x,t,k)=&E_+(k,t)+\sum_{j=1}^J\frac{1}{k-k_j}\Res_{k=k_j}M(x,t,k) E_+^{-1}(k_j) E_+(k,t)\\
					&-\frac{E_+(k,t)}{2\pi i}\int_{\Sigma_+}\frac{[M^-(E_+-I)V_0E_+^{-1}](x,t,\zeta)}{\zeta-k}\d \zeta,\quad k\in\C\backslash\Sigma_+,
				\end{split}
			\end{equation}
		By 	substituting the residue conditions, we have 
			\begin{equation}
				\begin{split}
					M(x,t,k)=&E_+(k,t)+\sum_{j=1}^J C_j(t)e^{i  (\lambda_{+,j}+\lambda_{-,j})  x   }
					\left(  M_2(k_j)  \hspace{0.1cm}, \hspace{0.1cm} 0   \right) E_+^{-1}(k_j,t)E_+(k,t)\\
					&+\frac{1}{2\pi i k}\int_{\Sigma_+}  [M^-(E_+-I)V_0E_+^{-1}](x,t,\zeta)\d \zeta+O\bigg(\frac{1}{k^2}\bigg),\\
					&k\rightarrow\infty, \quad \Im k>0.
				\end{split}\label{38}
			\end{equation}
			By a standard argument, we can recover the potential $q(x,t)$ 
			\begin{equation}
				q(x,t)=-i\lim_{\substack{
						k\rightarrow\infty\\
						\Im k>0}}kM_{12}(x,t,k),	
			\end{equation} 
			by substituting (\ref{38}) into it, we obtain
			\begin{equation}
				\begin{split}
					q(x,t)=&q_+(t)\bigg(1-\sum_{j=1}^J C_j(t)e^{i  (\lambda_{+,j}+\lambda_{-,j})  x   }M_{12}(x,t,k_j)
					\bigg)\\
					&+\frac{1}{2\pi }\int_{\Sigma_+}\frac{1}{\lambda_+(k)}\bigg[     
					(       \frac{iq_+}{k+\lambda_+}V_{0,11}-V_{0,12}   ) q_+(t)M_{12}^-(x,t,k)   )\\
					& -(   \frac{iq_+}{k+\lambda_+}V_{0,21}+V_{0,22}   ) q_+(t)M_{11}^-(x,t,k)	
					\bigg]\d k
				\end{split}
			\end{equation}
			\noindent\textbf{Remark 3.1} 	 When $q_-=q_+=q_0$, the reconstruction formula (\ref{rec}) matches Eq.  (128)  in \cite{zhangyan}.

			\noindent\textbf{Acknowledgements}
			
			This work is supported by  the National Science
		Foundation of China (Grant No.12371247).\vspace{2mm}
			
			\noindent\textbf{Data Availability Statements}
			
			The data that supports the findings of this study are available within the article.\vspace{2mm}
			
			\noindent{\bf Conflict of Interest}
			
			The authors have no conflicts to disclose.

		\end{document}